\definecolor{darkgreen}{rgb}{0.0,0,0.9}
\newcommand{\sat}{\texttt{SAT}}
\newcommand{\usc}{\texttt{USC}}
\newcommand{\start}{\texttt{start}}
\newcommand{\prevv}{\texttt{prev}'}
\newcommand{\prev}{\texttt{prev}}
\newcommand{\curr}{\texttt{curr}}
\newcommand{\currr}{\texttt{curr}'}
\newcommand{\en}{\texttt{end}}
\newcommand{\setword}[2]{%
  \phantomsection
  #1\def\@currentlabel{\unexpanded{#1}}\label{#2}%
}
\numberwithin{equation}{section}
\newtheorem{theorem}{Theorem}[section]
\newtheorem{lemma}{Lemma}[section]
\title{Unique Set Cover on Unit Disks and Unit Squares}
\author{
Saeed Mehrabi
\thanks{Cheriton School of Computer Science, University of Waterloo, Canada.
Email: \protect\url{smehrabi@uwaterloo.ca}}}
\date{}
\begin{document}

\maketitle

\begin{abstract}
We study the \textsc{Unique Set Cover} problem on unit disks and unit squares. For a given set $P$ of $n$ points and a set $D$ of $m$ geometric objects both in the plane, the objective of the \textsc{Unique Set Cover} problem is to select a subset $D'\subseteq D$ of objects such that every point in $P$ is covered by at least one object in $D'$ and the number of points covered uniquely is maximized, where a point is \emph{covered uniquely} if the point is covered by exactly one object in $D'$. In this paper, \begin{inparaenum}[(i)] \item we show that the \textsc{Unique Set Cover} is \textsc{NP}-hard on both unit disks and unit squares, and \item we give a PTAS for this problem on unit squares by applying the \emph{mod-one} approach of Chan and Hu (Comput. Geom. 48(5), 2015). \end{inparaenum}
\end{abstract}

\section{Introduction}
\label{sec:introduction}
Consider a set $P$ of $n$ points and a set $D$ of $m$ geometric objects both in the plane. For a subset $S\subseteq D$ of objects, we say that a point $p\subseteq P$ is \emph{covered uniquely} by $S$ if there is exactly one object in $S$ that covers $p$. In the \emph{\textsc{Unique Set Cover}} problem, the objective is to compute a subset $S\subseteq D$ of objects so as to cover all points in $P$ and to maximize the number of points covered uniquely by $S$.

There is a slightly different problem, called \textsc{Unique Cover}, where the input is the same as the \textsc{Unique Set Cover} problem and the objective is to compute a subset $S\subseteq D$ that maximizes the number of points covered uniquely; note that not all the points in $P$ are required to be covered in an instance of the \textsc{Unique Cover} problem. Both \textsc{Unique Set Cover} and \textsc{Unique Cover} belong to the larger class of the \textsc{Unit Cover} problem in which we are given the same input and the objective is to compute a minimum-cardinality subset $S\subseteq D$ so as to cover all points in $P$. Indeed, \textsc{Unique Set Cover} combines the objectives of the \textsc{Unique Cover} and \textsc{Unit Cover} problems.

In this paper, we are interested in the \textsc{Unique Set Cover} problem on unit disks and unit squares. Formally, given a set $P$ of $n$ points and a set $D$ of $m$ unit disks (resp., unit squares) both in the plane, the objective of the \textsc{Unique Disk} (resp., \textsc{Square}) \textsc{Set Cover} problem is to find a subset $S\subseteq D$ that covers all points in $P$ and that maximizes the number of points covered uniquely. Figure~\ref{fig:problemInstance} shows an instance of, e.g., the \textsc{Unique Disk Set Cover} problem and an optimal solution for this instance.

The \textsc{Unique Cover} problem was first studied by Erlebach and van Leeuwen~\cite{ErlebachL2008} on unit disks and is motivated by its applications in wireless communication networks, where the broadcasting range of equivalent base stations have been frequently modelled as unit disks. Providers of wireless networks often consider having a number of base stations to service their customers. However, if a customer receives signals from too many base stations, then the customer may receive no service at all due to the resulting interference. As such, each customer is ideally serviced by exactly one base station and we want such a service to be provided to as many customers as possible. Our motivation for studying the \textsc{Unique Set Cover} problem is that what if in addition to providing service to as many customers as possible by exactly one base station, we still want to do service \emph{all} the customers as well.

\begin{figure}[t]
\centering
\includegraphics[width=.60\textwidth]{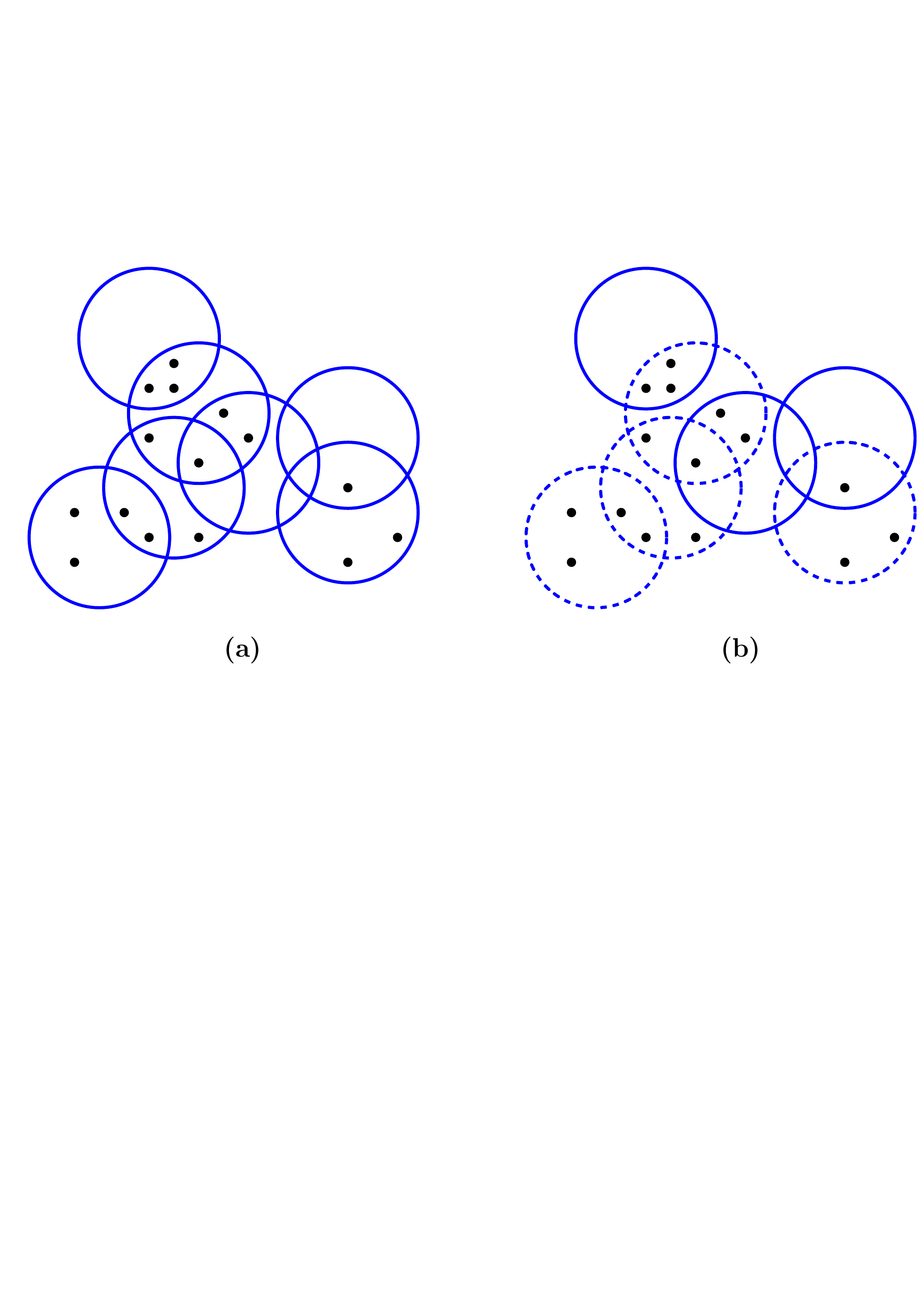}
\caption{{\small (a) An instance of the \textsc{Unique Disk Set Cover} problem with $n=15$ and $m=7$, and (b) an optimal solution, where 11 points are covered uniquely by the 4 dashed disks.}}
\label{fig:problemInstance}%
\end{figure}

\paragraph{Related Work.} The \textsc{Unit Cover} problem is a well-known \textsc{NP}-hard problem on unit disks and unit squares. Mustafa and Ray~\cite{MustafaR2010} gave the first PTAS for the \textsc{Unit Cover} problem on both unit disks and unit squares using a \emph{local search technique}. This technique was also discovered independently by Chan and Har-Peled~\cite{ChanHP2012} who gave the first PTAS for the maximum independent set problem on pseudo-disks in the plane. Demaine et al.~\cite{DemaineFHS2008} introduced the non-geometric variant of the \textsc{Unique Cover} problem and gave a polynomial-time $O(\log n)$-approximation algorithm for the problem, where $n$ is the number of elements of the universe in the corresponding set system. Erlebach and van Leeuwen~\cite{ErlebachL2008}, who were first to study the geometric version of the \textsc{Unique Cover} problem, showed that \textsc{Unique Cover} problem is \textsc{NP}-hard on both unit disks and unit squares (see also~\cite{ErikJVLeeuwen2009}). 

By combining dynamic programming and the shifting strategy of Hochbaum and Maass~\cite{HochbaumM1985}, Erlebach and van Leeuwen~\cite{ErlebachL2010} gave the first PTAS for the weighted version of the \textsc{Unit Cover} problem on unit squares, where each unit square in $D$ is associated with a positive value as \emph{weight} and the objective is to cover the points in $P$ so as to minimize the total weight of the unit squares selected.\footnote{The local search technique of Mustafa and Ray~\cite{MustafaR2010} does not apply to the weighted version of these problems.} This approach was also used by Ito et al.~\cite{ItoNOOUUU2016} who gave a PTAS for the \textsc{Unique Cover} problem on unit squares. However, this technique does not seem to work for unit disks. In fact, Ito et al.~\cite{ItoNOOUUU2014} used this approach to give a polynomial-time approximation algorithm for the \textsc{Unique Cover} problem on unit disks with approximation factor $\sigma<4.3095+\epsilon$, where $\epsilon>0$ is any fixed constant. Moreover, this approach involves sophisticated dynamic programming. Finally, by introducing a \emph{mod-one transformation}, Chan and Hu~\cite{ChanH2015} gave a PTAS for the \textsc{Red-Blue Unit-Square Cover} problem, which is defined as follows: given a red point set $R$, a blue point set $B$ and a set of unit squares in the plane, the objective is to select a subset of the unit squares so as to cover all the blue points while minimizing the number of red points covered.

\paragraph{Our Results.} In this paper, we first show that both \textsc{Unique Disk Set Cover} and \textsc{Unique Square Set Cover} are \textsc{NP}-hard. We note that the \textsc{Unique Cover} problem is shown to be \textsc{NP}-hard on unit disks and unit squares~\cite{ErlebachL2008}. However, their hardness, which is based on a reduction from the Independent Set problem on planar graphs of maximum degree 3, does not apply to proving the hardness of the \textsc{Unique Set Cover} problem mainly because all points in $P$ are required to be covered in an instance of the \textsc{Unique Set Cover} problem. We instead show a reduction from a variant of the planar 3SAT problem to prove the \textsc{NP}-hardness of \textsc{Unique Set Cover} on both unit disks and unit squares.

As our second result, we show that the \emph{mod-one} approach of Chan and Hu~\cite{ChanH2015} provides a PTAS for the \textsc{Unique Square Set Cover} problem. The only difference is that instead of storing 4-tuples of unit squares when solving the dynamic programming, we store 6-tuples of unit squares and show that they suffice to capture the necessary information (we will discuss this approach in more details in Section~\ref{sec:ptas}). See Table~\ref{tbl:ourTable} for a summary of previous and new results.

We first prove the \textsc{NP}-hardness of the problems in Section~\ref{sec:npHardness} and will then give a PTAS for the problem on unit squares in Section~\ref{sec:ptas}. Finally, we conclude the paper with a discussion on open problems in Section~\ref{sec:conclusion}.

\begin{table*}[t]
\centering
\begin{tabular}{ |p{4cm}||p{4.5cm}|p{6cm}| }
 \hline
 Problem & Unit Squares & Unit Disks\\
 \hline
 \textsc{Unit Cover} & \textsc{NP}-hard, PTAS~\cite{MustafaR2010} & \textsc{NP}-hard, PTAS~\cite{MustafaR2010}\\
 \textsc{Unique Cover} & \textsc{NP}-hard~\cite{ErlebachL2008}, PTAS~\cite{ItoNOOUUU2016} & \textsc{NP}-hard~\cite{ErlebachL2008}, 4.31-approximation~\cite{ItoNOOUUU2014}\\
 \textsc{Unique Set Cover} & {\bf \textsc{NP}-hard} [Theorem~\ref{thm:uniqueSetCoverDisksHardness}] & {\bf \textsc{NP}-hard} [Theorem~\ref{thm:unitSquaresHardness}]\\
 & {\bf PTAS} [Theorem~\ref{thm:shiftingStrategy}] & \\
 \hline
\end{tabular}
\caption{{\small A summary of previous and new results; the new results are shown in bold.}}
\label{tbl:ourTable}
\end{table*}

\section{Hardness}
\label{sec:hardness}
In this section, we first show that the \textsc{Unique Disk Set Cover} is \textsc{NP}-hard; the \textsc{NP}-hardness of the \textsc{Unique Square Set Cover} is proved analogously and we will discuss it at the end of this section.\\

\noindent\textsc{Unique Disk Set Cover}

\noindent{\bf Input.} A set $P$ of $n$ points, a set $D$ of $m$ unit disks, and an integer $k>0$.

\noindent{\bf Output.} Does there exist a set $S\subseteq D$ that covers all points in $P$ and that covers at least $k$ points uniquely?\\

To prove the hardness of the \textsc{Unique Disk Set Cover}, we show a reduction from the \textsc{Planar Variable Restricted 3SAT} (\textsc{Planar VR3SAT}, for short) problem. \textsc{Planar VR3SAT} is a constrained version of \textsc{3SAT} in which each variable can appear in at most three clauses and the corresponding \emph{variable-clause graph} is planar. Efrat et al.~\cite{EfratEK2007} showed that \textsc{Planar VR3SAT} is \textsc{NP}-hard. Let $I_\sat$ be an instance of \textsc{Planar VR3SAT} with $K$ clauses $C_1, C_2, \ldots, C_K$ and $N$ variables $X_1, X_2, \ldots, X_N$; we denote the two literals of a variable $X_i$ by $x_i$ and $\overline{x_i}$. We construct an instance $I_\usc$ of the \textsc{Unique Disk Set Cover} problem such that $I_\usc$ has a solution with at least $c\cdot (K+1)$ points covered unqiuely, for some $c$ that we will determine its value later, if and only if $I_\sat$ is satisfiable. Given $I_\sat$, we first construct the variable-clause graph $G$ of $I_\sat$ in the non-crossing comb-shape form of Knuth and Raghunathan~\cite{KnuthR1992}. Without loss of generality, we assume that the variable vertices lie on a vertical line and the clause vertices are connected from left or right of that line; see Figure~\ref{fig:3SatGraph} for an illustration. Note that each variable appears in at most three clauses.

\begin{wrapfigure}{r}{0.4\textwidth}
\centering
\vspace{-10pt}
\includegraphics[scale=.55]{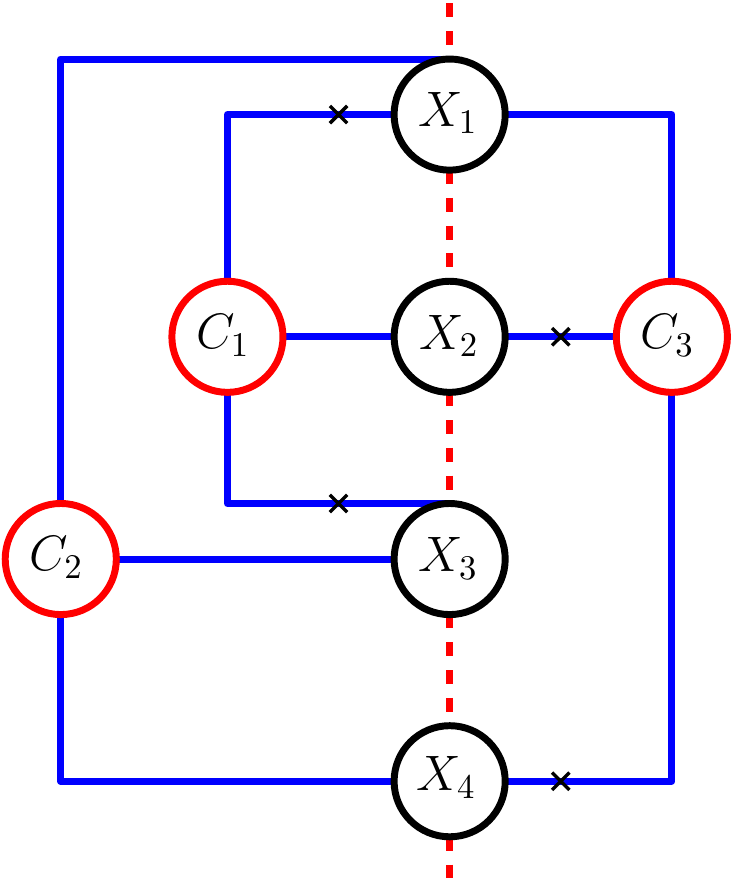}
\caption{{\small An instance of the \textsc{Planar VR3SAT} problem in the comb-shape form of Knuth and Raghunathan~\cite{KnuthR1992}. Crosses on the edges indicate negations; for example, $C_1=(\overline{x_1}\lor x_2 \lor \overline{x_3})$.}}
\label{fig:3SatGraph}%
\vspace{-10pt}
\end{wrapfigure}

\begin{figure}[t]
\centering
\includegraphics[width=.40\textwidth]{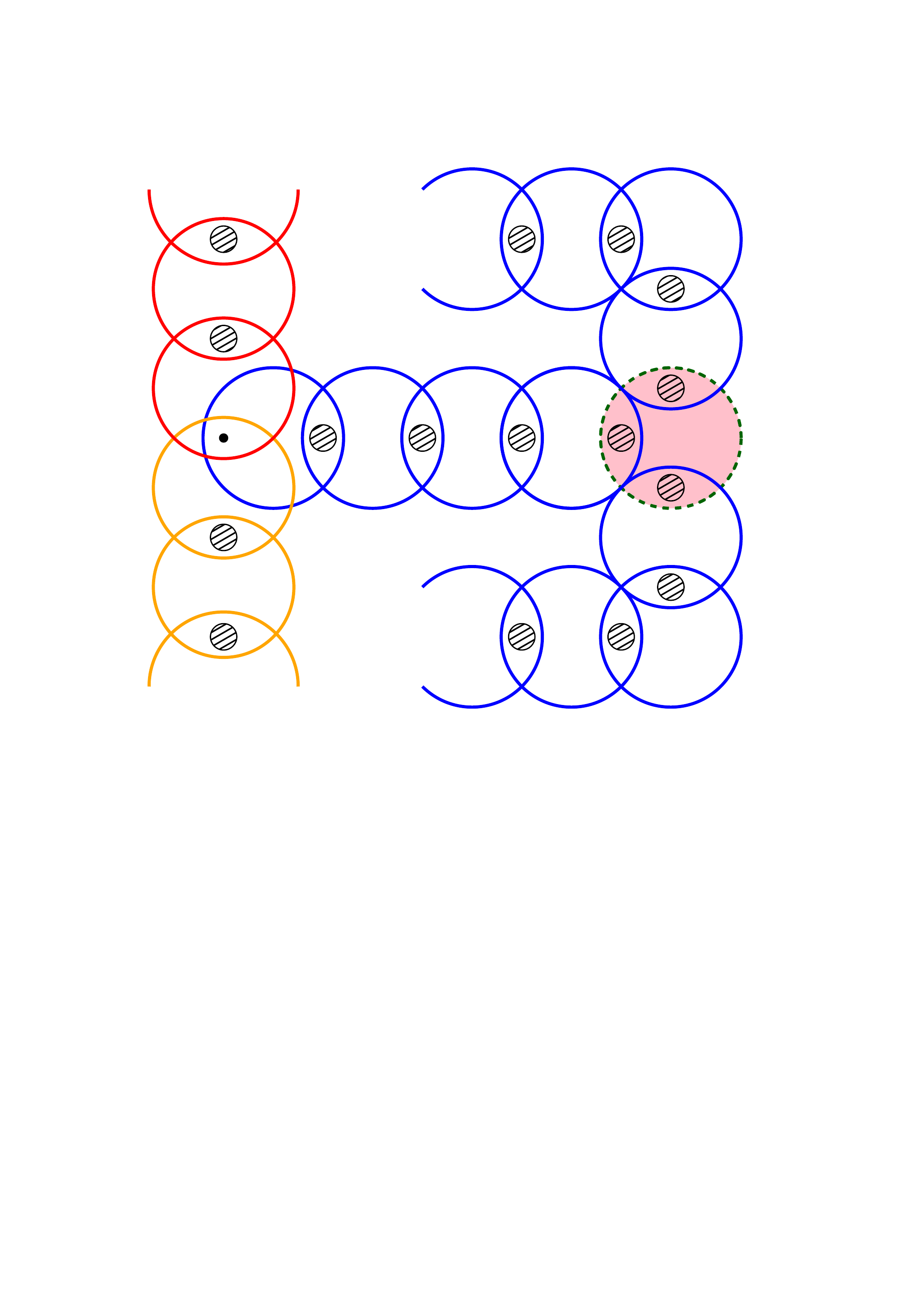}
\caption{{\small An illustration of the gadgets used in our reduction. {\bf Right.} The filled (pink) unit disk indicates a variable unit disk and each small (rising) shaded circle indicates a cloud consisting of $K+1$ points. Each variable unit disk shares three clouds with the start disks of the three wires that connect the variable unit disk to the clauses in which it appears. {\bf Left.} A clause gadget determined by the non-empty intersection of the last three unit disks of the three wires arriving from the literals of this clause, and the corresponding clause point.}}
\label{fig:variableGadget}%
\end{figure}

\paragraph{Gadgets.} For each variable $X_i\in I_\sat$, we replace the corresponding variable vertex in $G$ with a single unit disk containing three \emph{groups of points} each of which consists of $K+1$ points (where $K$ is the number of clauses in $I_\sat$). We call each such groups of points a \emph{cloud}. If a literal of $X_i$ appears in a clause, then the variable gadget of $X_i$ is connected to the corresponding clause gadget by a chain of unit disks such that \begin{inparaenum}[(i)] \item the first unit disk in the chain covers exactly one of the clouds in the variable unit disk, and \item every two consecutive unit disks in the chain share a cloud. \end{inparaenum} We call such a chain of unit disks a \emph{wire}; see Figure~\ref{fig:variableGadget}(Right) for an illustration. A cloud shared between two unit disks in a wire has also $K+1$ points. We call the unit disk of a wire that shares a cloud with the variable unit disk a \emph{start disk}. For the clause gadget, where three wires meet, we make the last three unit disks (each of which arriving from one of the wires) to have a non-empty intersection region in which we insert one single point; see Figure~\ref{fig:variableGadget}(Left). We call this point a \emph{clause point}.

Consider the cloud shared between a variable unit disk and a start disk. If all the clouds in the corresponding wire are covered uniquely, then exactly one of the variable unit disk and the start disk must be selected to cover the cloud shared between them. Consequently, depending on whether the variable unit disk is selected, we can decide whether the clause point of the clause gadget on the other end of the wire is covered by the last unit disk of this wire. That is, we can create two ways of covering the clouds of the wire uniquely, one of which will not be able to cover the corresponding clause point. It is clear from the variable unit disk shown in Figure~\ref{fig:variableGadget} that if the variable unit disk is selected (resp., is not selected), then the clause point is covered (resp., is not covered) by the last unit disk of the corresponding wire. We remark that this is always doable, even for the wires that require one bend, by adjusting the number of unit disks in the wire. See the top wire shown in Figure~\ref{fig:completeFigure} for an example. We set a variable to true or false depending on whether its corresponding variable unit disk is selected or not to cover the clouds that it contains.

\begin{wrapfigure}{l}{0.4\textwidth}
\centering
\includegraphics[width=.35\textwidth]{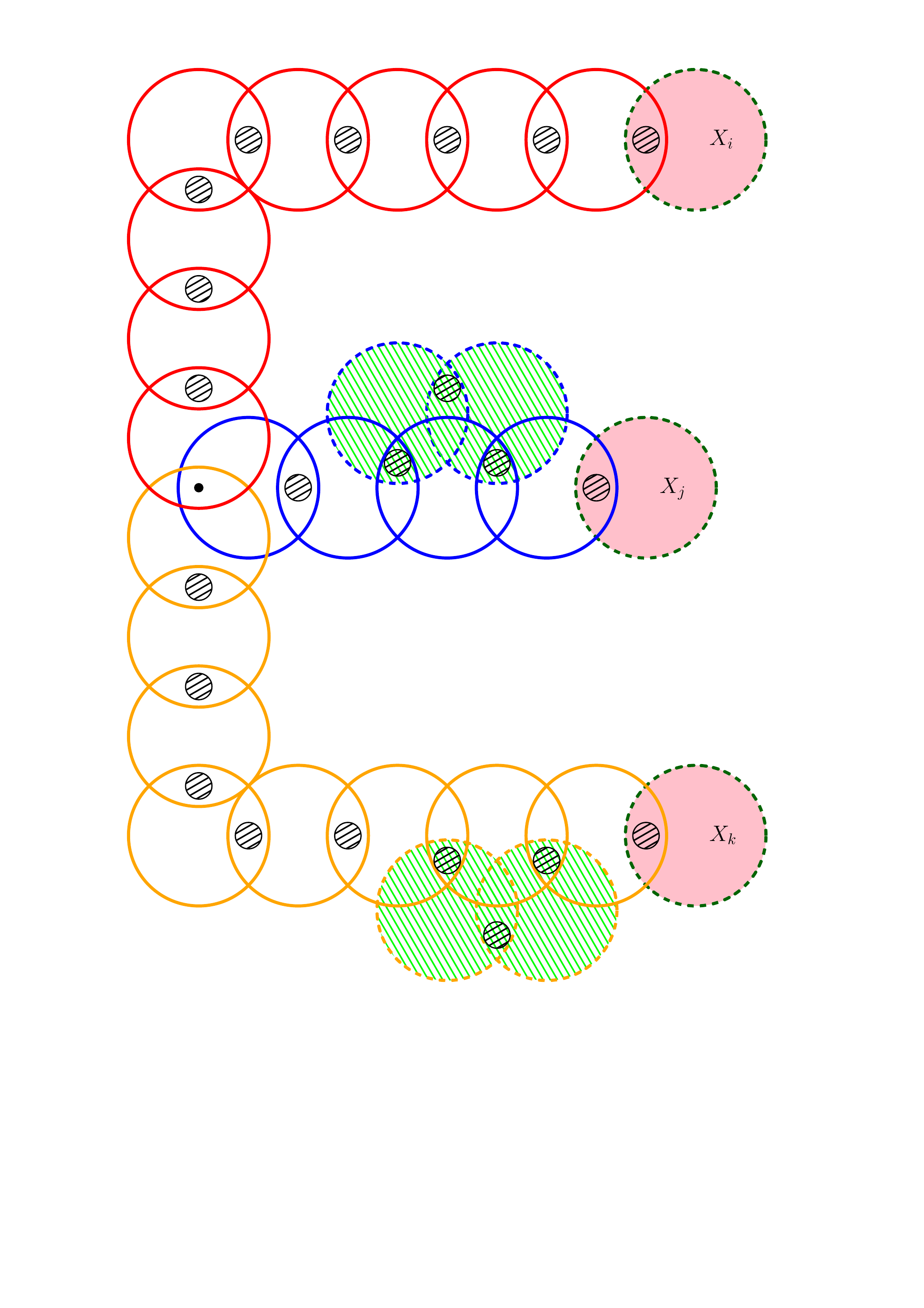}
\caption{{\small A complete illustration of a clause gadget $C=(x_i\lor \overline{x_j}\lor \overline{x_k})$ and its corresponding variable gadgets. The existence of a negation pair indicates that literal $\overline{x_j}$ (or literal $\overline{x_k}$) appears in $C$.}}
\label{fig:completeFigure}%
\end{wrapfigure}

Finally, we need a gadget for negation literals. Suppose that the literal $\overline{x_i}$ appears in a clause, for some variable $X_i$ in $I_\sat$. To indicate the negation literal, we add two additional unit disks besides the wire connecting the variable unit disk to the corresponding clause gadget; see the two (falling and green) shaded unit disks in the middle wire (i.e., the wire corresponding to $X_j$) shown in Figure~\ref{fig:completeFigure}. We call these two newly-added unit disks the \emph{negation pair}. Observe that the negation pair share a cloud and each of them covers one of the previously existed clouds in the current wire. The construction of the negation pair ensures that if the variable unit disk is selected (resp., is not selected), i.e., it is set to true (resp., false), then the clause point is not covered (resp., is covered). The same gadget can be used for the negation literals whose corresponding wires have a bend; see for instance the lowest wire shown in Figure~\ref{fig:completeFigure}.

\paragraph{Construction Details.} Clearly, our construction allows the wires to be connected to a variable unit disk from both left and right of the variable unit disk; we just need to move the corresponding clouds inside the variable unit disk accordingly. By scaling and making the drawing of the wires consistent with the edges of $G$, we can ensure that the unit disks of different wires will never intersect. Note that we can have the scaling so as to increase the number of unit disks in any wire by only a constant factor. Hence, by this and from the construction, we have that the number of disks in the instance $I_\usc$ is polynomial in terms of $K$ and $N$. To see the value of $n$, the number of points in $I_\usc$, we set $c$ to the number of clouds used in our construction. Since each cloud contains $K+1$ points, the total number of points in $I_\usc$ is $n=c\cdot (K+1)+K$ as we also have $K$ clause points. Therefore, the total complexity of the constructed instance $I_\usc$ is polynomial in $K$ and $N$. Moreover, it is not hard to see that $I_\usc$ can be constructed in polynomial time. We now show the following result.

\begin{lemma}
\label{lem:diskHardnessReduction}
There exists a feasible solution $S$ for $I_\usc$ that covers at least $c\cdot (K+1)$ points uniquely if and only if $I_\sat$ is satisfiable.
\end{lemma}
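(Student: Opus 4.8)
The plan is to establish the biconditional by exploiting the deliberate imbalance between the two kinds of points in $I_\usc$: there are exactly $c$ clouds, each contributing $K+1$ points, for a total of $c\cdot(K+1)$ cloud points, together with only $K$ clause points, so that $n=c\cdot(K+1)+K$. The whole argument hinges on the inequality $K+1>K$, which guarantees that no number of clause points can compensate for a single cloud that fails to be covered uniquely.

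First I would prove the key counting observation that a feasible solution $S$ covers at least $c\cdot(K+1)$ points uniquely if and only if every cloud is covered uniquely by $S$. For one direction, if all $c$ clouds are covered uniquely then their $c\cdot(K+1)$ points are already uniquely covered, meeting the bound. For the other, suppose some cloud is not uniquely covered; since each cloud has $K+1$ points, at most $(c-1)(K+1)$ cloud points are uniquely covered, and even if all $K$ clause points were uniquely covered the total would be at most $(c-1)(K+1)+K=c\cdot(K+1)-1<c\cdot(K+1)$, a contradiction. I would also record the local consequence that, for a feasible $S$ in which every cloud is uniquely covered, each cloud shared by two disks has exactly one of those two disks in $S$ (both would destroy uniqueness, neither would violate coverage).

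Next, for the forward direction, given a satisfying assignment of $I_\sat$ I would select the variable disk of $X_i$ exactly when $X_i$ is true and then propagate the forced alternating selection along each wire, using the fact that consecutive disks share a cloud and that wire lengths (and, for negated literals, the negation pair) are chosen so that the last disk of a wire lies in $S$ precisely when the corresponding literal is true. By the local consequence this covers every cloud uniquely, yielding $c\cdot(K+1)$ unique points; and since every clause contains a true literal, at least one of its three last disks is selected and covers the clause point, so $S$ is feasible. For the backward direction I would start from a feasible $S$ meeting the bound, invoke the counting observation to conclude that every cloud is uniquely covered, read off the alternating pattern on each wire to obtain a consistent truth value for each variable (true iff its variable disk is selected), and use feasibility --- every clause point must be covered --- to deduce that each clause has a selected last disk, hence a true literal; this assignment satisfies $I_\sat$.

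I expect the main obstacle to be the gadget-level verification underlying the phrase ``the last disk is selected precisely when the literal is true.'' One must check that the alternating selection forced by unique cloud coverage propagates with the correct parity all the way along each wire, that bends do not disturb this parity (handled by adjusting the number of disks, as noted for the top wire of Figure~\ref{fig:completeFigure}), and that the negation pair flips the parity exactly once so that a negated literal behaves oppositely to a positive one. Establishing this parity bookkeeping carefully --- rather than the clean counting argument --- is the crux of the reduction.
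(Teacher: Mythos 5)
Your proposal is correct and takes essentially the same route as the paper's proof: the same counting argument (which you make more explicit) showing that reaching $c\cdot(K+1)$ forces every cloud to be covered uniquely, the same truth assignment read off from whether the variable disk is in $S$, and the same alternating selection along wires with the negation pair flipping parity. You also rightly flag that the gadget-level parity verification is the real crux, which the paper likewise leaves largely to the figures.
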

\begin{proof}
$(\Rightarrow)$ Let $S$ be a feasible solution for $I_\usc$ that covers at least $c\cdot (K+1)$ points. First, by the choice of $c$ and the fact that we have exactly $K$ clause points and any other point belongs to some cloud, we conclude that all the clouds in $I_\usc$ must be covered uniquely by $S$. For each variable $X_i$, where $1\leq i\leq N$, we set the variable $X_i$ to true if its corresponding unit disk is in $S$; otherwise, we set $X_i$ to false. To show that this results in a truth assignment, suppose for a contradiction that there exists a clause $C$ that is not satisfied by this assignment. Take any variable $X\in C$. If $x\in C$ (resp., $\overline{x}\in C$), then the variable $X$ is set to false (resp., true) by the assignment and so the variable unit disk corresponding to $X$ is not in $S$ (resp., is in $S$). Consequently, it follows from the construction that the point clause of $C$ is not covered by the wire arriving from $X$. This means that none of the wires arriving at $C$ will cover its point clause --- this is a contradiction to the feasibility of $S$.

$(\Leftarrow)$ Given a truth assignment for $I_\sat$, we construct a feasible solution $S$ for $I_\usc$ covering at least $c\cdot (K+1)$ points uniquely as follows. For each variable $X_i$ in $I_\sat$, where $1\leq i\leq N$: if $X_i$ is set to true, then we add the corresponding variable unit disk into $S$; otherwise, we add the start disks intersecting this variable unit disk into $S$. Consequently, every other unit disks of the corresponding wires are added into $S$ in such a way that each cloud is covered uniquely by one of the unit disks along the wire. Clearly, all clouds are covered by $S$. Moreover, $S$ also covers all the clause points because the only way to have a clause $C$ satisfied is to have at least one of the wires arriving at $C$ covering the clause point of $C$. Finally, by adding every other unit disk of each wire into $S$, we ensure that all the clouds are covered uniquely by $S$. Since we have $c$ clouds each of which consists of $K+1$ points, we conclude that $S$ is a feasible solution that covers at least $c\cdot (K+1)$ points in $I_\usc$ uniquely.
\end{proof}

By Lemma~\ref{lem:diskHardnessReduction}, we have the following theorem.

\begin{theorem}
\label{thm:uniqueSetCoverDisksHardness}
The \textsc{Unique Disk Set Cover} is \textsc{NP}-hard.
\end{theorem}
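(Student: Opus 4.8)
The plan is to conclude NP-hardness directly from the polynomial-time reduction that has already been set up, since Lemma~\ref{lem:diskHardnessReduction} supplies the crucial correctness equivalence. First I would fix the decision threshold $k := c\cdot(K+1)$, matching the bound in the lemma, so that the constructed instance $I_\usc$ is a YES-instance of \textsc{Unique Disk Set Cover} exactly when it admits a feasible cover with at least $c\cdot(K+1)$ uniquely covered points.

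Next I would verify the two requirements of a polynomial-time many-one reduction. For the source problem, I would invoke that \textsc{Planar VR3SAT} is \textsc{NP}-hard (Efrat et al.~\cite{EfratEK2007}). For the reduction map, I would appeal to the Construction Details: the number of disks is polynomial in $K$ and $N$ (each wire is lengthened by at most a constant factor during the scaling that separates wires), and the number of points is $n = c\cdot(K+1)+K$; hence $I_\usc$ has size polynomial in $|I_\sat|$ and is produced in polynomial time.

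With these pieces in place, the correctness of the reduction is precisely the content of Lemma~\ref{lem:diskHardnessReduction}: $I_\usc$ has a feasible solution covering at least $c\cdot(K+1)$ points uniquely if and only if $I_\sat$ is satisfiable. Chaining the polynomial-time computable construction map with this equivalence yields a polynomial-time reduction from \textsc{Planar VR3SAT} to \textsc{Unique Disk Set Cover}, which establishes the claimed \textsc{NP}-hardness.

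Since the lemma already carries the combinatorial burden, I expect no genuine obstacle in the theorem proof itself; the one point worth stating carefully is why the threshold $c\cdot(K+1)$ is forcing. With exactly $K$ clause points and $c$ clouds of $K+1$ points each, attaining this bound compels \emph{every} cloud to be covered uniquely, which is exactly the structural fact that drives the forward direction of the lemma. Everything else reduces to bookkeeping on the size and construction time of $I_\usc$.
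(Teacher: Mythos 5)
Your proposal is correct and matches the paper's approach exactly: the paper derives Theorem~\ref{thm:uniqueSetCoverDisksHardness} directly from Lemma~\ref{lem:diskHardnessReduction} together with the already-established polynomial-time construction and the \textsc{NP}-hardness of \textsc{Planar VR3SAT}. You simply spell out the standard reduction bookkeeping that the paper leaves implicit.
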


\paragraph{\textsc{NP}-Hardness for Unit Squares.} The proof for the \textsc{NP}-hardness of the \textsc{Unique Square Set Cover} problem is almost identical to the one shown above for unit disks. In what follows, we mainly describe the gadgets for unit squares so as then one can verify that the hardness follows.

Starting by the comb-shape form of Knuth and Raghunathan~\cite{KnuthR1992} for an instance of the Planar VR3SAT problem, we replace each variable vertex with a unit square and will then have three (unit-square type) wires connecting the variable unit square to the corresponding clause gadgets. This is again doable even if we have one bend along the wire; see Figure~\ref{fig:gadgetForSquares} for an illustration. Moreover, we can make the three wires arriving at a clause to have a non-empty intersection region in which we insert our single clause point. Finally, the negation pair is constructed in an analogous way. See Figure~\ref{fig:gadgetForSquares}. Note that more unit squares are used (than unit disks) in a wire connecting a variable unit square to its corresponding clauses. However, we can still ensure to have the scaling step to have a polynomial number of unit squares in each wire.

One can verify that the construction is again polynomial in $K$ and $N$, and we can prove a lemma similar to Lemma~\ref{lem:diskHardnessReduction} for this new construction. So, we have the following result.

\begin{figure}[t]
\centering%
\includegraphics[width=.65\textwidth]{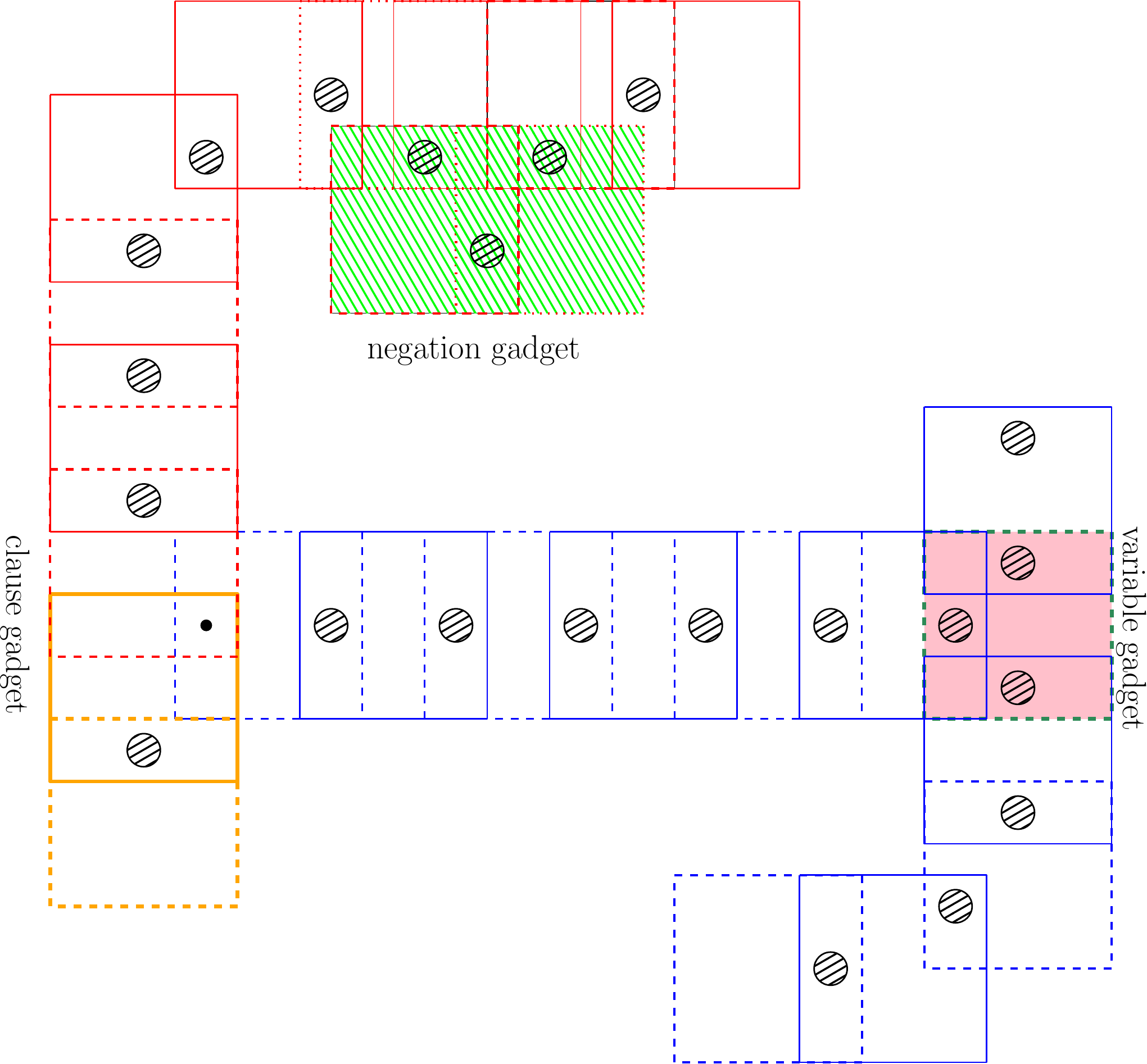}
\caption{{\small An illustration of the gadgets for unit squares. Solid and dashed unit squares alternate for better visibility.}}
\label{fig:gadgetForSquares}%
\end{figure}

\begin{theorem}
\label{thm:unitSquaresHardness}
The \textsc{Unique Square Set Cover} is \textsc{NP}-hard.
\end{theorem}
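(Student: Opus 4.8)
The plan is to mirror the reduction from \textsc{Planar VR3SAT} used for unit disks, replacing every disk gadget by an axis-aligned unit-square gadget and then reusing the counting argument of Lemma~\ref{lem:diskHardnessReduction} almost verbatim. Starting from an instance $I_\sat$ laid out in the comb-shape form of Knuth and Raghunathan, I would first construct the square analogues of the four building blocks: (i) a \emph{variable unit square} carrying three clouds of $K+1$ points each; (ii) a \emph{wire}, namely a chain of unit squares whose start square shares exactly one cloud with the variable square and in which every two consecutive squares share a cloud; (iii) a \emph{clause point} placed in the common intersection of the three final squares of the wires meeting at a clause; and (iv) a \emph{negation pair} of two extra squares that flips the covering parity along a wire. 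Figure~\ref{fig:gadgetForSquares} indicates how each of these is realized with axis-aligned unit squares.

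The first thing to verify is geometric realizability, which is where the square and disk cases genuinely differ. Because unit squares overlap along axis-aligned rectangular regions rather than lens-shaped ones, I would check that (a) two consecutive squares of a wire can be placed so their overlap contains a cloud while each square also covers its other designated cloud, (b) a wire can negotiate a single bend while preserving this alternating sharing pattern, and (c) the three squares arriving at a clause can be positioned to have a common nonempty intersection holding the clause point. All three are routine for axis-aligned squares, and the negation pair is arranged exactly as in the disk case so that selecting the variable square leaves the clause point uncovered while not selecting it covers the clause point (the convention being simply reversed relative to a non-negated wire).

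Next I would account for the size of $I_\usc$. As already noted, a square wire needs more squares than a disk wire, but only by a constant factor, so the same scaling argument yields a polynomial number of unit squares per wire; hence $m$, and with it the whole construction, is polynomial in $K$ and $N$ and computable in polynomial time. Setting $c$ equal to the total number of clouds and recalling that each cloud has $K+1$ points while there are exactly $K$ clause points, we again obtain $n=c\cdot(K+1)+K$.

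Finally I would prove the square analogue of Lemma~\ref{lem:diskHardnessReduction}: there is a feasible $S$ covering at least $c\cdot(K+1)$ points uniquely if and only if $I_\sat$ is satisfiable. This argument is identical to the disk case and uses no disk-specific geometry. In the forward direction, the choice of $c$ together with the count of the only non-cloud points (the $K$ clause points) forces every cloud to be covered uniquely, which forces the covering parity along each wire and lets us read off a satisfying assignment from which variable squares lie in $S$. In the reverse direction, a satisfying assignment tells us whether to take each variable square or the start squares of its wires, and taking every other square along each wire makes all clouds uniquely covered while every clause point is covered precisely because each clause has a satisfied literal. Thus the main obstacle is not the combinatorics, which transfers unchanged, but confirming that the axis-aligned square geometry realizes the shared-cloud chains, the single bend, the triple clause intersection, and the negation pair; once Figure~\ref{fig:gadgetForSquares} certifies these, Theorem~\ref{thm:unitSquaresHardness} follows.
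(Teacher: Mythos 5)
Your proposal follows the paper's own argument exactly: replace each disk gadget by its axis-aligned unit-square analogue (variable square with three clouds, shared-cloud wires, triple-intersection clause point, negation pair), verify the construction stays polynomial after scaling, and rerun the counting argument of Lemma~\ref{lem:diskHardnessReduction}. This matches the paper's proof, and your explicit attention to geometric realizability of the square gadgets is if anything more careful than the paper's sketch.
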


\section{PTAS}
\label{sec:ptas}
In this section, we give a PTAS for the \textsc{Unique Square Set Cover} problem by applying the mod-one approach of Chan and Hu~\cite{ChanH2015}; we first describe this approach.

Recall the dynamic programming involed in the PTASes developed by Ito et al.~\cite{ItoNOOUUU2016}, and Erlebach and van Leeuwen~\cite{ErlebachL2010}. The dynamic program is essentially based on the line-sweep paradigm by considering points and squares from left to right, and extending the uniquely covered region sequentially. However, adding one square can influence squares that were already chosen and so we need to keep track of the squares that are possibly influenced by a newly-added square. The complication stems mainly from the fact that a new square may influence too many squares. This requires keeping track of the changes on the two separate chains induced by the boundary of the current squares. The mod-one approach avoids this complication by transforming these chains into two chains that are connected at the corner points.

For a set $S = \{s_1,\dots, s_t\}$ of $t$ unit squares containing a common point, where $s_1, \dots, s_t$ are arranged in increasing $x$-order of their centres, the set $S$ is called a \emph{monotone set} if the centres of $s_1, \dots, s_t$ are in increasing or decreasing $y$-order. The boundary of the union of the squares in a monotone set $S$ consists of two monotone chains, called \emph{complementary chains}. In the \emph{mod-one transformation}, a point $(x, y)$ in the plane is mapped to the point $(x \mod 1, y \mod 1)$, where $x \mod 1$ denotes the fractional part of the real number $x$. Applying the mod-one to a monotone set transforms the squares in such a way that the two complementary chains are mapped to two monotone chains that are connected at the corner points. See Figure~\ref{fig:modOne} for an illustration.

\begin{figure}[t]
\centering
\includegraphics[scale=.65]{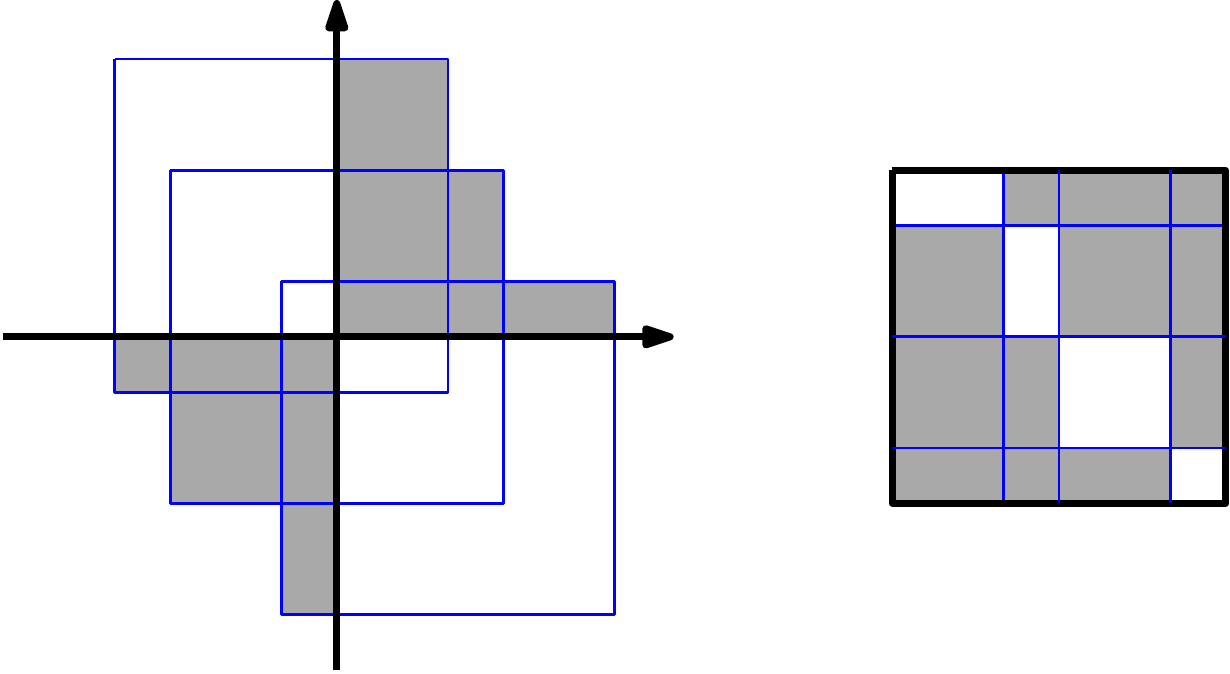}
\caption{{\small A monotone set consisting of three unit squares (left), and the resulting set after applying the mod-one transformation (right).}}
\label{fig:modOne}
\end{figure}

We now show that the mod-one approach provides a PTAS for the \textsc{Unique Square Set Cover} problem. The plan is to first give an exact dynamic programming algorithm for a special variant of the problem, where the points in $P$ are all inside a $k\times k$ square for some constant $k$, and then to apply the shifting strategy of Hochbaum and Maass~\cite{HochbaumM1985} to obtain our PTAS. Note that this follows the framework of~\cite{ChanH2015} with the only difference that we store 6-tuples of unit squares, instead of storing 4-tuples, when solving the dynamic programming.

\begin{lemma}
\label{lem:ourDecomposingOPT}
Let $OPT$ denote an optimal solution for an instance of the \textsc{Unique Square Set Cover} problem in which the set $P$ is inside a $k\times k$ square, for some constant $k$. Then, $OPT$ can be decomposed into $O(k^2)$ monotone sets.
\end{lemma}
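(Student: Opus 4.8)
The plan is to bound the structural complexity of $OPT$ by showing that the unit squares it uses can be partitioned into a constant number of monotone sets, where the constant depends only on $k$. First I would observe that any unit square whose center lies outside a bounded neighborhood of the $k \times k$ region cannot cover any point of $P$, so $OPT$ may be assumed to consist only of squares whose centers lie in a region of size $O(k) \times O(k)$. I would then overlay a grid of unit cells on this region, yielding $O(k^2)$ cells, and classify each square of $OPT$ by the cell containing its center.

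\smallskip

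The key step is to argue that within each grid cell, the squares of $OPT$ (or a constant number of subfamilies thereof) form a monotone set in the sense defined above. Recall that a monotone set requires the squares to share a common point and to be simultaneously monotone in $x$- and $y$-order of their centers. Since all centers in a single unit cell lie within a $1 \times 1$ box, any two such squares (each of side length $1$) automatically overlap, so the common-point condition is within reach; the remaining work is to ensure the $x$/$y$-monotonicity of the centers. This is where I would partition the squares inside each cell further: sorting by $x$-coordinate and splitting according to whether the $y$-coordinates are increasing or decreasing gives at most a constant number of monotone chains per cell. Combined over all $O(k^2)$ cells, this yields $O(k^2)$ monotone sets in total.

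\smallskip

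The hard part will be handling the common-point requirement rigorously and controlling how many monotone pieces each cell must be broken into. Simply placing centers in a common unit cell guarantees pairwise overlap, but a monotone set demands that \emph{all} squares in the set share one common point, which is a stronger, Helly-type condition; I would invoke the fact that axis-parallel unit squares are pseudo-disks (indeed, boxes satisfy the Helly property in the plane) so that pairwise intersection within a sufficiently small cell forces a common intersection point, possibly after refining the grid to cells of side less than $1$ to make the overlap guaranteed. Likewise, the decomposition into monotone chains by an increasing/decreasing split is a Dilworth-style argument; I would make sure the number of resulting chains per cell is an absolute constant rather than growing with the number of squares. Once each cell contributes $O(1)$ monotone sets and there are $O(k^2)$ cells, the stated $O(k^2)$ bound follows immediately by summation.
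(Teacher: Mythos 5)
There is a genuine gap at the step you yourself flag as ``the hard part.'' Splitting the squares of a cell by sorting on $x$ and classifying the $y$-order as increasing or decreasing is \emph{not} a constant-size decomposition: by the Erd\H{o}s--Szekeres/Dilworth phenomenon, an arbitrary family of $t$ centers in a unit cell may require $\Theta(\sqrt{t})$ monotone chains to cover it (take $\sqrt{t}$ increasing blocks arranged in decreasing position, for instance). So ``at most a constant number of monotone chains per cell'' is simply false for an arbitrary subfamily of $OPT$, and no purely combinatorial chain decomposition can rescue it. The Helly-type part of your argument is fine (axis-parallel boxes that pairwise intersect do share a common point), but it is not where the difficulty lies.

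The missing ingredient is that one must exploit \emph{optimality}, not just the geometry of where the centers sit. A square of $OPT$ that is entirely contained in the union of the other selected squares covers no point uniquely, and deleting it keeps every point covered (and can only increase the count of uniquely covered points), so one may assume every square of $OPT$ appears on the boundary of the union of the squares in its group. The paper (following Chan and Hu) groups the squares by a grid point $p$ of a unit grid that each square contains, and then observes that, in each of the four quadrants at $p$, the part of the union boundary contributed by these squares is a single staircase; the squares appearing on one staircase, read in order, have centers that are simultaneously $x$- and $y$-monotone, hence form one monotone set. This caps the count at $4$ monotone sets per grid point and $4(k+1)^2 = O(k^2)$ overall. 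Your cell-of-the-center bookkeeping could be adapted to this, but without the ``discard redundant squares and read off the four boundary staircases around a common point'' step, the per-cell constant you need does not hold.
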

\begin{proof}
This lemma is essentially proved in~\cite{ChanH2015}, but for a different problem. The idea is to draw a unit side-length grid over the $k\times k$ square and then show that each unit square appears in the boundary of the union of unit square containing a fixed grid point $p$. By dividing the plane into four quadrants at $p$ and grouping the unit squares containing $p$ based on their contribution to the part of the union boundary in quadrants, $OPT$ is decomposed into $4(k+1)^2$ monotone sets.
\end{proof}

We now give an exact dynamic programming algorithm for the variant of the \textsc{Unique Square Set Cover} problem, where all points of $P$ are inside a $k\times k$ square for some constant $k$.

\begin{theorem}
\label{thm:solvingRestrictedExactly}
For any instance of \textsc{Unique Square Set Cover} problem in which the set $P$ is inside a $k\times k$ square for a constant $k$, the optimal solution can be computed in $O(n\cdot m^{O(k^2)})$ time.
\end{theorem}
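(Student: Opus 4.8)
The plan is to follow the mod-one dynamic-programming framework of Chan and Hu~\cite{ChanH2015}, modified to keep track of coverage multiplicity. First I would apply the mod-one transformation to every unit square in $D$. By Lemma~\ref{lem:ourDecomposingOPT} the optimal solution $OPT$ decomposes into $t=O(k^2)$ monotone sets, and after the transformation the union boundary of each monotone set becomes two monotone chains meeting at the corner points. I would then process the instance by a left-to-right sweep whose events are the points of $P$ together with the (transformed) vertical square edges, building up $OPT$ one frontier at a time.

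The dynamic program maintains, for each of the $O(k^2)$ monotone sets, a constant-size signature of the current sweep frontier. In the red-blue setting of Chan and Hu a $4$-tuple per monotone set is enough: two squares per complementary chain reconstruct the staircase frontier and decide whether a swept point is \emph{covered}. The essential difference for us is that the objective counts points covered \emph{exactly once}, so I must distinguish, for each point, coverage of multiplicity one from multiplicity at least two. To witness a second covering square within a monotone set I would store one extra boundary square per chain, i.e.\ three squares per chain and hence a $6$-tuple per monotone set (against their $4$-tuple). Crucially, the state records these signatures for all $O(k^2)$ monotone sets simultaneously, so that the total coverage multiplicity of a point is obtained by summing its per-set multiplicities, each determined to be $0$, $1$, or ``at least $2$''. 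A state is thus a choice of $O(k^2)$ squares out of $m$, giving $m^{O(k^2)}$ states in total.

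For the transitions, when the sweep reaches a point $p$ I would read off the stored $6$-tuples to compute its total coverage multiplicity: multiplicity $0$ makes the partial solution infeasible and is discarded, multiplicity $1$ contributes $1$ to the objective carried in the table, and multiplicity at least $2$ leaves $p$ covered but not uniquely. When the sweep reaches a left or right edge of a square I would update the affected monotone set's tuple, advancing or retracting the frontier while preserving monotonicity. Since there are $O(n+m)$ events, each processed in time polynomial in the (constant-size) state, and $m^{O(k^2)}$ reachable states with $m\le m^{O(k^2)}$, the total running time is $O(n\cdot m^{O(k^2)})$, as claimed.

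The hard part will be proving correctness of the $6$-tuple abstraction. I must establish the invariant that, at every sweep position, the three stored boundary squares per chain are exactly the squares of that monotone set able to affect whether a not-yet-processed point is covered zero, one, or at least two times, and that this invariant survives every edge transition. This rests on a geometric claim, analogous to the one underlying Chan and Hu's $4$-tuples: after the mod-one transformation, the extra boundary square per chain suffices to certify double coverage of any point lying ahead of the frontier, so that distinguishing the three multiplicity classes never requires looking deeper into the monotone set. Once this claim is in place, the feasibility check and the objective bookkeeping go through exactly as in~\cite{ChanH2015}.
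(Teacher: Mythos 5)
Your proposal follows essentially the same route as the paper's proof: a mod-one sweep-line dynamic program whose states record, for each of the $O(k^2)$ monotone sets from Lemma~\ref{lem:ourDecomposingOPT}, a $6$-tuple of squares (extending Chan and Hu's $4$-tuples by two extra witnesses so that coverage multiplicity $0$, $1$, or at least $2$ can be read off), with infeasible transitions discarded and uniquely covered points accumulated as the objective, for $m^{O(k^2)}$ states and $O(n\cdot m^{O(k^2)})$ total time. The correctness of the $6$-tuple abstraction that you flag as the remaining hard part is asserted rather than proved in the paper as well, so your write-up is at the same level of detail as the published argument.
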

\begin{proof}
We describe the dynamic programming algorithm by a state-transition diagram. We store 6-tuples of unit squares in each state. More specifically, a state is defined to consist of \begin{inparaenum}[(i)] \item a vertical sweep line $\ell$ that passes through a corner of an input square, after taking mod 1, and \item $O(k^2)$ 6-tuples of unit squares of the form $(s_{\start}, s_{\prevv}, s_{\prev}, s_{\curr}, s_{\currr}, s_{\en})$ given that $s_{\start}$, $s_{\prevv}$, $s_{\prev}$, $s_{\curr}$, $s_{\currr}$, and $s_{\en}$ are in the increasing order of $x$-coordinate, they form a monotone set, and $\ell$ lies between the corners of $s_{\prev}$ and $s_{\curr} \mod 1$. \end{inparaenum}

Observe that each 6-tuple corresponds to a monotone set $S$: $s_{\start}$ and $s_{\en}$ represent the start and end squares of $S$ and $s_{\prev}$ and $s_{\curr}$ represent the squares of the two complementary chains of $S$, after taking mod 1, that are intersected by the sweep line $\ell$. Moreover, we define $s_{\prevv}$ and $s_{\currr}$ as the predecessor of $s_{\prev}$ and the successor of $s_{\curr}$, respectively. We now define a transition between two states and its cost function. Given a state $A$, we create a transition from $A$ to a new state $B$ as follows. Let $(s_{\start}, s_{\prevv}, s_{\prev}, s_{\curr}, s_{\currr}, s_{\en})$ be the 6-tuple such that the corner point of $s_{\curr}$ has the smallest $x$-coordinate mod 1. First, the new sweep line $\ell'$ is located at the corner of $s_{\curr}$. Next, we replace this 6-tuple by a new 6-tuple $(s_{\start}, s_{\prev}, s_{\curr}, s_{\currr}, s', s_{\en})$, where $s'$ is a unit square that satisfies the conditions of a state. $B$ is now set to this new state with having all other 6-tuples left unchanged. To see the cost of this transition, let $x$ (resp., $y$) be the number of points in $P$ that lie between $\ell$ and $\ell'$, after taking mod 1, and are not covered (resp., are covered uniquely) by the squares from the $O(k^2)$ 6-tuples of unit squares, before taking mod 1. If $x>0$, then we remove this transition from the diagram (since all points must be covered). Otherwise, we set the cost of this transition to $y$.

The problem is then reduced to finding the longest path in this state-transition diagram for which we can appropriately add the start and end states. Since the diagram is a directed acyclic graph, we can construct the diagram and find the longest path in $O(n\cdot m^{O(k^2)})$ time.
\end{proof}

We can now apply the shifting strategy of Hochbaum and Maass~\cite{HochbaumM1985} to obtain our PTAS. The proof of the following theorem is much similar to the one given in~\cite{ChanH2015} and so we omit it here.

\begin{theorem}
\label{thm:shiftingStrategy}
For any fixed constant $\epsilon>0$, there exists a polynomial time $(1+\epsilon)$-approximation algorithm for the \textsc{Unique Square Set Cover} problem.
\end{theorem}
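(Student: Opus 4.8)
The plan is to combine the exact algorithm of Theorem~\ref{thm:solvingRestrictedExactly} with the shifting strategy of Hochbaum and Maass~\cite{HochbaumM1985} in the standard two-dimensional fashion. Fix the approximation parameter $\epsilon>0$ and choose an integer $k=\Theta(1/\epsilon)$ to be tuned at the end. First I would partition the plane into a grid of vertical strips of width $k$ and, independently, into horizontal strips of width $k$. The shifting strategy considers $k$ distinct horizontal offsets (shifts by $0,1,\dots,k-1$ units) and, for each, $k$ distinct vertical offsets, giving $k^2$ combined shifted partitions of the plane into $k\times k$ cells. For each of these $k^2$ partitions, and for each nonempty $k\times k$ cell, I would run the exact dynamic program of Theorem~\ref{thm:solvingRestrictedExactly} on the points of $P$ lying inside that cell (using all squares in $D$ that can cover such points), and then take the union of the per-cell solutions as the candidate global solution for that partition. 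Finally I would return the best candidate over all $k^2$ partitions.

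The key steps, in order, are as follows. I would first argue feasibility: since every point of $P$ lands in exactly one cell of a given partition and the exact algorithm covers all points inside its cell, the union over cells covers all of $P$, so each candidate is a feasible \textsc{Unique Square Set Cover} solution. I would then bound the running time: the number of shifts is $k^2=O(1/\epsilon^2)$, the number of cells is polynomial, and each invocation of Theorem~\ref{thm:solvingRestrictedExactly} costs $O(n\cdot m^{O(k^2)})$; with $k$ a constant depending only on $\epsilon$, the total time is polynomial in $n$ and $m$. The heart of the argument is the approximation guarantee. Here I would fix an optimal solution $OPT$ for the whole instance and, via an averaging argument over the $k^2$ shifts, show that some shift loses only a small fraction of the uniquely covered points. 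The loss is charged to points whose unique coverage is ``broken'' by the grid lines of the partition --- that is, points lying near a cell boundary, or points whose covering square straddles two cells so that restricting to a single cell may introduce a second covering square and destroy uniqueness. A standard shifting estimate shows that the expected number of such boundary-affected points over a uniformly random shift is at most an $O(1/k)$ fraction of $|OPT|$, so choosing $k=\Theta(1/\epsilon)$ makes the best shift recover a $(1-\epsilon)$ fraction.

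The main obstacle, and the only place where this differs from a textbook shifting proof, is precisely the uniqueness interaction across cell boundaries: in ordinary \textsc{Unit Cover} the objective is monotone and subadditive across cells, but here a point covered uniquely in the global solution could cease to be unique once a neighbouring cell's squares are added back, or conversely the per-cell restriction of $OPT$ might not itself be an optimal (or even valid) within-cell solution. I would handle this by restricting $OPT$ to each cell and noting that, after discarding the $O(1/k)$-fraction of boundary points, the restricted solution is feasible and preserves unique coverage on the surviving interior points, so the per-cell optimum computed by Theorem~\ref{thm:solvingRestrictedExactly} is at least as good as this restriction. Summing over cells and over the best shift then yields the $(1-\epsilon)$ bound, after which rescaling $\epsilon$ gives the claimed $(1+\epsilon)$-approximation. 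As the excerpt notes, this argument is essentially identical to the one in~\cite{ChanH2015}, so in the paper it is stated and its proof omitted.
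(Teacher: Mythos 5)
The paper itself omits the proof (deferring to the shifting argument of Chan and Hu), and your high-level plan --- two-dimensional shifting over $k\times k$ cells with the exact algorithm of Theorem~\ref{thm:solvingRestrictedExactly} as the per-cell subroutine --- is exactly the intended route. Feasibility and the running-time bound are fine. However, there is a genuine gap in your approximation analysis, precisely at the step you flag as the ``main obstacle.'' You claim that the boundary loss is ``at most an $O(1/k)$ fraction of $|OPT|$,'' but the quantity your argument actually controls is the number of \emph{input points} lying within distance $1$ of a grid line, which for the best shift is an $O(1/k)$ fraction of $n$, not of $OPT$. For \textsc{Unique Set Cover} these can differ by an arbitrary factor: take many far-apart pairs of heavily overlapping unit squares, each square forced into every feasible solution by a private point, with $L$ points in each pairwise intersection; then $OPT=2p$ while $n=(L+2)p$, so $OPT/n\to 0$. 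Concretely, your chain of inequalities is
\[
\mathrm{ALG}\;\ge\;\sum_{C}\mathrm{unique}(S_C)\;-\;|P_{\mathrm{near}}|\;\ge\;|OPT|-|P_{\mathrm{near}}|,
\]
and $|P_{\mathrm{near}}|=O(n/k)$ cannot be charged to $\epsilon\,|OPT|$. The deeper issue is that the per-cell optimum returned by Theorem~\ref{thm:solvingRestrictedExactly} may realize all of its unique coverage on points within distance $1$ of the cell boundary, and every one of those can lose uniqueness when a neighbouring cell's squares (which may protrude by up to one unit) are unioned in; saying ``the per-cell optimum is at least as good as the restriction of $OPT$'' compares the wrong objectives, since that guarantee is about total unique coverage in the cell, not unique coverage that survives the union.

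The standard repair, which you should make explicit, is to change the per-cell subproblem: still require all points of the cell to be covered, but have the dynamic program \emph{count} only uniquely covered points lying in the core of the cell, i.e., at ($L_\infty$) distance more than $1$ from the cell boundary. A unit square selected for a neighbouring cell must intersect that cell and hence cannot reach the core, so core-unique points of each cell solution remain unique in the global union, giving $\mathrm{ALG}\ge\sum_C \mathrm{core\text{-}unique}(S_C)$ with no subtraction. On the other side, the restriction of $OPT$ to a cell is feasible for that cell and uniquely covers every $OPT$-unique point of the cell's core, so $\sum_C \mathrm{core\text{-}unique}(S_C)\ge |OPT|-|U_{\mathrm{near}}|$, where $U_{\mathrm{near}}$ is the set of \emph{$OPT$-unique} points within distance $1$ of a grid line; averaging over shifts now correctly bounds this loss by an $O(1/k)$ fraction of $|OPT|$. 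This modification costs nothing in Theorem~\ref{thm:solvingRestrictedExactly} (only the transition costs change), but without it the $(1-\epsilon)$ bound does not follow. A minor side remark: the restriction of $OPT$ to a cell is always a valid within-cell solution (every point of the cell is covered by some square of $OPT$), so your parenthetical worry on that point is unfounded.
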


\section{Conclusion}
\label{sec:conclusion}
In this paper, we showed that the \textsc{Unique Disk Set Cover} and \textsc{Unique Square Set Cover} problems are both \textsc{NP}-hard and gave a PTAS for the \textsc{Unique Square Set Cover} problem. Our PTAS is based on the mod-one transformation of Chan and Hu~\cite{ChanH2015}, which does not apply to unit disks. Therefore, giving a PTAS for the \textsc{Unique Disk Set Cover} remains open.

\bibliographystyle{plain}
\bibliography{ref}

\end{document}